\title{\fontsize{19}{21}\selectfont Computation of Graph Polynomials via Tree Decomposition: Theory, Algorithms, and Python Implementation}
\author{
    \fontsize{15}{17}\selectfont Mehul Bafna\textsuperscript{*} and Shaghik Amirian\textsuperscript{*} \\
    \\
    University of Applied Sciences Mittweida \\
     \textsuperscript{*}The authors contributed equally to this work.\\
}
\date{\today}
\newtheorem{theorem}{Theorem}
\newtheorem{definition}[theorem]{Definition}
\newtheorem{lemma}[theorem]{Lemma}
\begin{document}

\maketitle

\begin{abstract}
Graph polynomials encode fundamental combinatorial invariants of graphs. Their computation is investigated using tree and path decomposition frameworks, with formal definitions of treewidth, k-trees, and pathwidth establishing the structural basis for algorithmic efficiency. Explicit algorithms are constructed for each polynomial, leveraging decomposition order and state transformation mappings to enable tractable computation on graphs of bounded treewidth. Python implementations validate the methods, and computational complexity is analyzed with respect to sparse and k-degenerate graph classes. These results advance decomposition-based approaches for polynomial computation in algebraic graph theory.
\end{abstract}

\begin{center}
\small{\textbf{Keywords:} Graph Theory, Graph Polynomials, Graph Algorithms, Networks}
\end{center}

\section{Introduction}\label{section-introduction}

Graph theory plays a pivotal role in diverse areas, such as the creation of distance-based algorithms,
network 
flow, and traffic management-based applications. Graphs are a critical source for
holding information in the form of nodes and edges. With the plethora of information, it is 
vital to shorten the graph with the information being preserved. This study provides an
overview of the concept of tree decomposition that consolidates a simple undirected graph but still
preserves the general information and properties of that original graph. The concept of tree decomposition
was introduced by Neil Robertson and Paul Seymour \cite{ROBERTSON198339}. Furthermore, there are numerous applications of graphs with bounded tree-width that are utilized for the construction of significant algorithms in graph theory \cite{bodlaender2008combinatorial}. In the following manner, the remaining paper has been structured. Section 2 gives an overview of the tree decomposition, tree-width, k-trees,
and partial k-trees \cite{Bodlaender_1993,BODLAENDER19981}. Section 3 describes path decomposition, path-width, nice-path decomposition, and composition order \cite{ponitz2003methode}. Section 4 explains the computation of various graph polynomials,
namely the Independence Polynomial \cite{ind}, Chromatic Polynomial \cite{birkhoff1912determinant}, Domination Polynomial \cite{dom}, and the Bipartition Polynomial \cite{dod2015bipartition} based on an algorithmic approach on the composition order \cite{poly}.
Section 5 concludes the study with the implementation of the independence polynomial and domination
polynomial in Python. Moreover, the definitions in the paper have been studied from \cite{poly}.

\section{Tree decomposition}\label{section- tree decomposition}
\begin{definition}
For a given graph G = (V, E) with V and E as the vertex set and edge set, respectively, \textbf{tree decomposition} of G is described as a pair (T,$\gamma$) comprising 

- a tree T = (U,H) \

- a mapping $\gamma$ : U $\xrightarrow[]{}$ $2^{V}$ which renders to any node u of T a vertex subset $\gamma$(u) $\subseteq$ V \

 \hspace{2mm} satisfying below given three properties 
\begin{enumerate}
    \item (Vertex preservation) Every vertex of G is allocated to at least one node of T, which gives the relation 
    \ 
    
    \[ \bigcup_{u \in U} \gamma (u) = V\]
   
    \item (Edge preservation) For every edge $e \in$ E, there exists a node u in T consisting of the end vertices of $e$, i.e.
    \ 
    
    \[ \forall \hspace{1mm} e \in E: \exists \hspace{1mm} u \in U: e \subseteq \gamma (u)  \]

    \item (Compactness) For nodes $u_1, u_2$ and $u_3$ in T, if $u_2$ exists in unique path between $u_1$ and $u_3$, below relation holds 
    \ 
    
    \[ \gamma (u_1) \cap  \gamma (u_3) \subseteq \gamma (u_2)\] 
\end{enumerate}
\end{definition} 

\ 

\begin{figure}[h!]
    \centering
    \begin{tikzpicture}  
        [scale=1.8,auto=center,every node/.style={circle,fill=transparent!20}]  
    
        \node (a1) at (-1.732/2,1/2) {$v_1$};  
        \node (a2) at (0,0)  {$v_2$};  
        \node (a3) at (-1.732/2,-1/2)  {$v_3$};  
        \node (a4) at (1,0) {$v_4$};  
        \node (a5) at (3.732/2,1/2)  {$v_5$};  
        \node (a6) at (3.732/2,-1/2)  {$v_6$};  
        \node (a7) at (2.75,0)  {$v_7$}; 
    
        \draw (a1) -- (a2);  
        \draw (a2) -- (a3);  
        \draw (a1) -- (a3);
        \draw (a2) -- (a4);
        \draw (a5) -- (a4);
        \draw (a6) -- (a4);
        \draw (a7) -- (a5);
        \draw (a7) -- (a6);
    \end{tikzpicture}  
    
    \caption{Graph representation with 7 vertices and 8 edges}
    \label{fig:graph_example}
\end{figure}
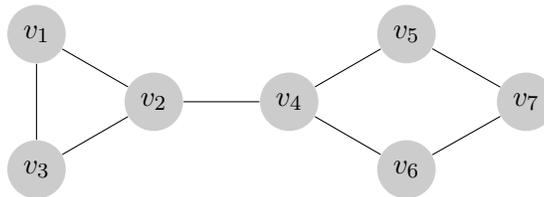

\begin{table}[h!]
\centering

\begin{tabularx}{1.0\textwidth} { 
  | >{\centering\arraybackslash}X 
  | >{\centering\arraybackslash}X 
  | }

 \hline
 $T_i$  $|$ $ \forall \hspace{1mm} i \in \{1,2\}$ & Nodes \\
 \hline

\begin{center}
     \begin{tikzpicture}  
  [scale=1.8,auto=center,every node/.style={circle,fill=transparent!20}]  
  \node (a1) at (0,0) {$v_1, v_2, v_3$};  
  \node (a2) at (0,-1)  {$v_2, v_4$};  
  \node (a3) at (-1.732/2,-3/2)  {$v_4, v_5$};  
  \node (a4) at (1.732/2,-3/2) {$v_4, v_6$};  
  \node (a5) at (-1.732/2,-5/2)  {$v_5, v_7$};  
  \node (a6) at (1.732/2,-5/2)  {$v_6,v_7$};  

  \draw (a1) -- (a2);  
  \draw (a3) -- (a2);
  \draw (a4) -- (a2);
  \draw (a3) -- (a5);
  \draw (a4) -- (a6);
\end{tikzpicture}  
\end{center}

&
\vspace{25mm}
\begin{center}
    \{\{$v_1, v_2,v_3$\}, \{$v_2,v_4$\}, \{$v_4,v_5$\},  \{$v_4,v_6$\},  \{$v_5,v_7$\},\{$v_6,v_7$\}\}
\end{center}

\\
\hline

\begin{center}
    \begin{tikzpicture}  
  [scale=1.8,auto=center,every node/.style={circle,fill=transparent!20}]  
  \node (a7) at (0,0) {$v_1, v_2, v_3$};
  \node (a8) at (0,-1) {$v_2, v_4$};
  \node (a9) at (0,-2) {$v_4, v_5,v_6$};
  \node (a10) at (-0.866, -2.5) {$v_5,v_7$};
  \node (a11) at (0.866, -2.5) {$v_6,v_7$};

  \draw (a7) -- (a8);
  \draw (a8) -- (a9);
  \draw (a10) -- (a9);
  \draw (a11) -- (a9);
\end{tikzpicture} 
\end{center}

& 

\vspace{25mm}
\begin{center}
    \{\{$v_1, v_2,v_3$\}, \{$v_2,v_4$\}, \{$v_4,v_5,v_6$\}, \{$v_5,v_7$\},\{$v_6,v_7$\}\} 
\end{center}

\\
\hline

\end{tabularx}

\caption{Two distinct tree decompositions for graph in Figure 1 with its respective nodes}
\label{tab:tree-decompositions}

\end{table}

\subsection{Treewidth}\label{subsection-Treewidth}

\begin{definition}
Given a graph G = (V, E) with (T = (U, H),$\gamma$) as its tree decomposition, then the width of the tree decomposition is described as 
\[ w(T,\gamma) = max \{ |\gamma (y)| \hspace{1mm} | y \in U \} - 1\]

\

\textbf{Treewidth}, denoted as tw(G), is defined as the minimum width of a tree decomposition of G.
\end{definition}

\begin{figure}[h!]
\centering
\begin{tikzpicture}  
  [scale=1.8,auto=center,every node/.style={circle,fill=transparent!20}]
  
  \node (a1) at (-1.732/2,1/2) {$v_2$};  
  \node (a2) at (0,0)  {$v_4$};  
  \node (a3) at (-1.732/2,-1/2)  {$v_3$};  
  \node (a4) at (1,0) {$v_5$};  
  \node (a5) at (3.732/2,1/2)  {$v_6$};  
  \node (a6) at (3.732/2,-1/2)  {$v_7$};  
  \node (a7) at (2.75,0)  {$v_8$}; 
  \node (a8) at (-1.732,0)  {$v_1$}; 

  \draw (a1) -- (a2);  
  \draw (a2) -- (a3);  
  \draw (a1) -- (a8);
  \draw (a3) -- (a8);  
  \draw (a2) -- (a4);
  \draw (a5) -- (a4);
  \draw (a6) -- (a4);
  \draw (a7) -- (a5);
  \draw (a7) -- (a6);
\end{tikzpicture}

\caption{Graph representation with 8 vertices and 9 edges}
\label{fig:custom-graph}
\end{figure}

\begin{table}[h!]

\begin{center}
  \begin{tabularx}{0.5\textwidth} { 
  | >{\centering\arraybackslash}X 
  | >{\centering\arraybackslash}X 
  | }
 \hline

$T_1$  \\
\hline
 \begin{center}
   \begin{tikzpicture}  
  [scale=1.8,auto=center,every node/.style={circle,fill=transparent!20}] 
  \node (a10) at (5.5,0) {$v_1,v_2,v_3$};
  \node (a11) at (5.5,-1) {$v_2,v_3,v_4$};
  \node (a12) at (5.5,-2) {$v_4,v_5$};
  \node (a13) at (4.793,-2.707) {$v_5,v_6$};
  \node (a14) at (6.207,-2.707) {$v_5,v_7$};
  \node (a15) at (4.793,-3.707) {$v_6,v_8$};
  \node (a16) at (6.207,-3.707) {$v_7,v_8$};
  
  \draw (a10) -- (a11);
  \draw (a11) -- (a12);
  \draw (a12) -- (a13);
  \draw (a12) -- (a14);
  \draw (a13) -- (a15);
  \draw (a16) -- (a14); 
  
\end{tikzpicture}  
\end{center}\\
\hline
$w(T_1,\gamma_1)= max \{ |\gamma_1 (y_1)| \hspace{1mm} | y_1 \in U_1 \}$ - 1 = 2\\
\hline

\end{tabularx}  
\end{center}

\caption{Graph in figure 2 with tw=1}
\label{tab:tree-decompositions}

\end{table}

\subsection{K-trees and partial k-trees}\label{subsection-K-trees and partial k-trees}

A \textbf{k-tree} is constructed with the following procedure:
\begin{enumerate}
\item Initially, to form a complete graph with order k+1, as it is isomorphic to a k-tree.
\item On a recursive basis, adding vertices in a way that every added vertex v follows:
\begin{itemize}
\item $|$ N(v) $|$ = k
\item The graph induced by N[v] forms a (k+1)-clique
\end{itemize}
\end{enumerate}

A \textbf{partial k-tree} is a graph that can be derived from a k-tree by the removal of its edges. In other words, if we have a k-tree G = (V, E), then a partial k-tree is H = (V', E') such that V = V' and E' $\subseteq$ E. A partial k-tree is a spanning subgraph of the k-tree.

\begin{theorem}
A partial k-tree with order m has at most $km - \binom{k+1}{2}$ edges. 
\end{theorem}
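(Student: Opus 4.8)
The plan is to first pin down the exact number of edges in a k-tree of order $m$, and then use the fact that a partial k-tree is a spanning subgraph of a k-tree, so it cannot have more edges than that k-tree.

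First I would prove, by induction on $m \ge k+1$, that every k-tree on $m$ vertices has exactly $km - \binom{k+1}{2}$ edges. For the base case $m = k+1$, the k-tree is the complete graph $K_{k+1}$, which has $\binom{k+1}{2}$ edges, and one checks that $k(k+1) - \binom{k+1}{2} = \binom{k+1}{2}$, so the formula holds. For the inductive step, assume the claim for k-trees of order $m-1$. By the recursive construction, a k-tree $G$ of order $m$ is obtained from a k-tree $G'$ of order $m-1$ by adjoining a vertex $v$ with $|N(v)| = k$. Since $v$ does not occur in $G'$, the $k$ edges incident to $v$ are new, hence $|E(G)| = |E(G')| + k = \bigl(k(m-1) - \binom{k+1}{2}\bigr) + k = km - \binom{k+1}{2}$, completing the induction.

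Next, let $H = (V', E')$ be a partial k-tree, so by definition there is a k-tree $G = (V, E)$ with $V' = V$ and $E' \subseteq E$. If $H$ has order $m$, then $|V| = |V'| = m$, so by the previous paragraph $|E| = km - \binom{k+1}{2}$, and therefore $|E'| \le |E| = km - \binom{k+1}{2}$, which is the asserted bound. (In particular the bound is attained exactly by the k-trees themselves, so it is tight.)

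The only step needing any care is verifying in the inductive argument that no edge is double-counted when a vertex is added; this is immediate, since each newly introduced vertex contributes only edges incident to itself, none of which can already be present in the smaller k-tree. Beyond that, there is no real obstacle: once the edge count of a k-tree is established by this short induction, the passage to partial k-trees is just the monotonicity of edge count under taking spanning subgraphs.
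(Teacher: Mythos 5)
Your proof is correct and follows essentially the same route as the paper's: both count the $\binom{k+1}{2}$ edges of the initial $(k+1)$-clique plus $k$ new edges per added vertex to get $|E| = km - \binom{k+1}{2}$ for a k-tree, then invoke the spanning-subgraph property for partial k-trees. Your version merely packages the count as an explicit induction (and notes that no edge is double-counted), which is a slightly more careful write-up of the same argument.
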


\begin{proof}
Let G be a k-tree with order m. For computing the number of edges in G, we initially consider the $\binom{k+1}{2}$ edges of $K_{k+1}$. Now the remaining vertices in the k-tree are $m - (k+1)$, and as per the second part of the k-tree definition, the remaining edges are $(m - (k+1))\cdot k$. Hence, on summation we obtain $E(G)=km - \binom{k+1}{2}$. Since a partial k-tree is generated from a k-tree by the removal of edges, the number of edges for a k-tree serves as an upper bound for the edges of a partial k-tree.
\end{proof}

\begin{figure}[H]
\centering
\begin{tikzpicture}  
  [scale=6.5,auto=center,every node/.style={circle,fill=transparent!20}] 
  
  \node (a1) at (0,0) {$v_1$};
  \node (a2) at (1,0) {$v_2$};
  \node (a3) at (1/2,1.732/2) {$v_3$};
  \node (a4) at (1/2,0.2887) {$v_4$};
  \node (a5) at (1/3,0.3849) {$v_5$};
\node (a6) at (2/3,0.3849) {$v_6$};
\node (a7) at (1/2,0.096) {$v_7$};
\node (a8) at (1/2,1.1547) {$v_8$};
  \draw (a1) -- (a2);
  \draw (a1) -- (a3);
  \draw (a1) -- (a4);
  \draw (a2) -- (a3);
  \draw (a3) -- (a4);
  \draw (a2) -- (a4);
  \draw (a1) -- (a5);
  \draw (a3) -- (a5);
  \draw (a4) -- (a5);
  \draw (a2) -- (a6);
 \draw (a3) -- (a6);
 \draw (a4) -- (a6);
\draw (a1) -- (a7);
\draw (a2) -- (a7);
\draw (a4) -- (a7);
\draw (a1) -- (a8);
\draw (a2) -- (a8);
\draw (a3) -- (a8);
  
\end{tikzpicture}

\caption{A 3-tree with 8 vertices}
\end{figure}
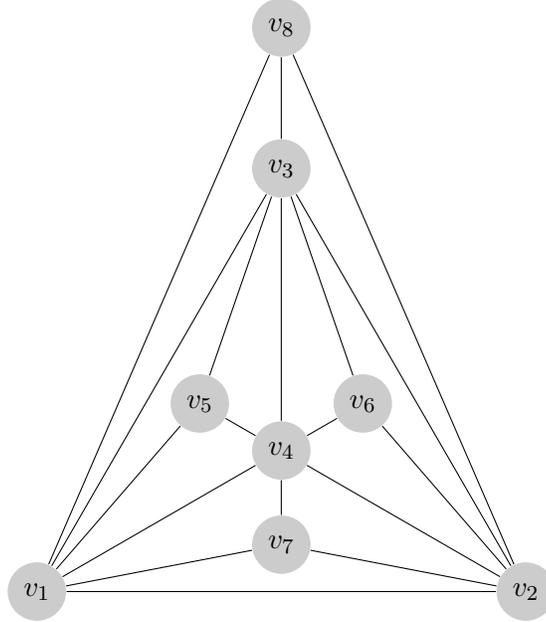

\section{Path decomposition}\label{section-Path decomposition}

A path decomposition is a special case of tree decomposition, such that T in tree decomposition $(T,\gamma)$ is a path. 

\begin{definition}
For a given graph G = (V, E) with V and E as the vertex set and edge set, respectively, the \textbf{path decomposition} is defined as a sequence $(V_1,V_2,\cdots,V_k)$ of the vertex subsets (not necessarily disjoint) of G that satisfies

\begin{enumerate}
    \item (Vertex preservation) For any vertex $v \in V$, there exist at least one vertex subset $V_j$  with $j \in \{1,\cdots,k\}$ that contains v
    \[ \bigcup_{j \in \{1,\cdots,k\}} V_j = V\]
   
    \item (Edge preservation) Every edge $e = \{u,v\}\in$ E, is contained in one of the vertex subset $V_j$ with \ 
    
    $j \in \{1,\cdots,k\}$ 
    \[ \forall \hspace{1mm} e \in E: \exists \hspace{1mm} j \in \{1,\cdots,k\}: u,v \in V_j  \]

    \item (Compactness) $V_a \cap V_c \subseteq V_b \hspace{2mm} \forall \hspace{1mm} a,b,c \in \{1,\cdots,k\}$ with $1 \leq a < b <c \leq k $
    
\end{enumerate}
\end{definition}

\subsection{Pathwidth}
\begin{definition}
For a given graph G = (V, E), \textbf{pathwidth} denoted as pw(G) is defined as the minimum width for a path decomposition of G. \\ 

\end{definition}
\begin{figure}[H]
    \centering
    \begin{tikzpicture}  
  [scale=1.8,auto=center,every node/.style={circle,fill=transparent!20}] 
    
  \node (a1) at (0,0) {$v_1$};  
  \node (a2) at (-1,0)  {$v_2$}; 
  \node (a3) at (1,0)  {$v_3$};  
  \node (a4) at (0,-1) {$v_4$};  
  \node (a5) at (-1.707,0.707)  {$v_5$};  
  \node (a6) at (-1.707,-0.707) {$v_6$};  
  \node (a7) at(1.707,0.707)  {$v_7$}; 
  \node (a8) at(1.707,-0.707)  {$v_8$}; 
  \node (a9) at(-0.707,-1.707)  {$v_9$}; 
  \node (a10) at(0.707,-1.707)  {$v_{10}$}; 
  
  \draw (a1) -- (a2); 
  \draw (a1) -- (a4);  
  \draw (a1) -- (a3);
  \draw (a2) -- (a6);
  \draw (a5) -- (a2);
  \draw (a3) -- (a8);
  \draw (a7) -- (a3);
  \draw (a4) -- (a9);
  \draw (a4) -- (a10);

\end{tikzpicture}  

    \caption{Graph representation with 10 vertices and 9 edges}
    \label{fig:placeholder}
\end{figure}

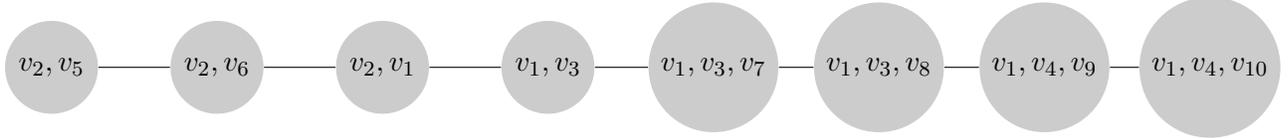
\begin{figure}[H]
\centering
    \begin{tikzpicture}  
  [scale=3.3,auto=center,every node/.style={circle,fill=transparent!20}] 
    
  \node (a1) at (-8/3,0) {$v_2,v_5$};  
  \node (a2) at (-2,0)  {$v_2,v_6$}; 
  \node (a3) at (-4/3,0)  {$v_2,v_1$};  
  \node (a4) at (-2/3,0) {$v_1,v_3$};  
  \node (a5) at (0,0)  {$v_1,v_3,v_7$};  
  \node (a6) at (2/3,0) {$v_1,v_3,v_8$};  
  \node (a7) at (4/3,0)  {$v_1,v_4,v_9$}; 
  \node (a8) at (2,0)  {$v_1,v_4,v_{10}$};  
  
  \draw (a1) -- (a2); 
  \draw (a2) -- (a3);  
  \draw (a4) -- (a3);
  \draw (a4) -- (a5);
  \draw (a5) -- (a6);
  \draw (a6) -- (a7);
  \draw (a7) -- (a8);

\end{tikzpicture}  
\caption{Path decomposition for the graph in Figure 4 with pw = 2}
\end{figure}

\subsection{Nice path decomposition and composition order}\label{section-nice path decomposition}
\subsubsection{Nice path decomposition}
\begin{definition}
For a given graph $G = (V, E)$, a path decomposition ($V_1,\cdots,V_k$) is defined as a \textbf{nice path decomposition} if it satisfies the properties mentioned in Definition 4 in addition to the following properties - \

\begin{enumerate}
\item $V_1 = V_k = \phi$
\item Any two consecutive vertex subsets differ by exactly one element, i.e. 
\[ |V_j \setminus V_i| = 1 \hspace{1mm} \text{with} \hspace{1mm}i,\hspace{0.2mm}j \in \{ 1,\cdots,k \} \hspace{1mm} and \hspace{1mm} |j-i|=1\] 
\end{enumerate}
\end{definition}

For the given graph $G = (V, E)$ with order $m$, nice path decomposition can also be described in terms of \ 

signed vertices as sequence ($\alpha_1u_1,\cdots,\alpha_{2m}u_{2m}$), where $u_j \in V$ for $j \in \{1,\cdots,2m\}$ and $\alpha_j$  is described as:

   \[\alpha_j=
    \begin{cases}
      \hspace{2.5mm} 1, & \text{if}\ V_j \setminus V_{j-1} = \{u_j\}, \\
      -1, & \text{if}\ V_{j-1} \setminus V_j =\{u_j\}
    \end{cases}\]

where $j \in \{1,\cdots,2m-1\}$ and $|V_j \setminus V_{j-1}|$ denotes the difference between two consecutive vertex subsets.

\subsubsection{Composition order}

\begin{definition}
For a given graph G = (V, E), a \textbf{composition order} is defined as a sequence ($y_1,\cdots,y_t$) consisting of edges and signed vertices satisfying the following properties -

\begin{enumerate}
    \item Every edge $e \in E$ occurs only once in the composition order
    \item After removal of every edge, the obtained sub-sequence is a nice path decomposition
    \item Suppose $y_i = \{f,g\}$ is an edge of G, then the composition order consists of four terms $y_a, y_b, y_c, y_d$ that satisfies $a<i, b<i, c>i$ and $d>i$ such that $y_a = +f$, $y_b = +g$, $y_c = -f$ and $y_d = -g$.
    \item t = 2$\cdot$order(G) + size(G)
\end{enumerate}
\end{definition}

\begin{figure}[H]
\centering
    \begin{tikzpicture}  
  [scale=1.8,auto=center,every node/.style={circle,fill=transparent!20}] 
    
  \node (a1) at (-1.732/2,1/2) {$v_1$};  
  \node (a2) at (0,0)  {$v_2$}; 
  \node (a3) at (-1.732/2,-1/2)  {$v_3$};  
  \node (a4) at (1,0) {$v_4$};  
  \node (a5) at (3.732/2,1/2)  {$v_5$};  
  \node (a6) at (3.732/2,-1/2)  {$v_6$};  

  \draw (a1) -- (a2); 
  \draw (a2) -- (a3);  
  \draw (a1) -- (a3);
  \draw (a2) -- (a4);
  \draw (a5) -- (a4);
  \draw (a6) -- (a4);
  \draw (a6) -- (a5);

\end{tikzpicture}  
\caption{Graph representation with 6 vertices and 7 edges}
\end{figure}
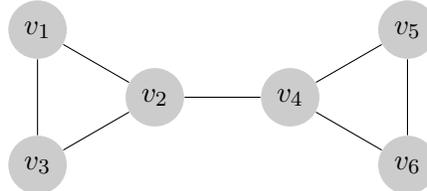

For the graph given in Figure 6, below are the nice path decomposition and composition order \\

\textbf{Nice path decomposition: } ($+v_1,+v_3,+v_2,-v_1,-v_3,+v_4,-v_2,+v_5,+v_6,-v_4,-v_5,-v_6$)
\ 

\textbf{Composition order: } ($+v_1,+v_3,\{v_1,v_3\},+v_2,\{v_1,v_2\},\{v_2,v_3\},-v_1,-v_3,+v_4,\{v_2,v_4\},-v_2,+v_5,$

\hspace{37mm}$\{v_4,v_5\},+v_6,\{v_4,v_6\},-v_4,\{v_5,v_6\},-v_5,-v_6$)

\section{Computation of graph polynomials based on composition order}

In this section, we discuss the construction of graph polynomials based on the composition order using an algorithmic approach \cite{poly}. The algorithm with respect to the path decomposition comprises a sequence of fundamental steps. The fundamental step resembles a change in a state set with every step. A state $z$ constitutes $y$ and $f$ as a pair, denoting an index and a value, respectively. Depending on the type of polynomial that is being computed, an index can be a set, a partition, or any other object varying in complexity. Since on termination of the algorithm, it is required to obtain the desired polynomial, so the value rendered is a polynomial. \\ 

 The factors involved in the composition order can be summarized as follows:

\begin{itemize}
    \item Vertex addition($+v$)
    \item Vertex deletion($-v$)
    \item Edge insertion($e$)
\end{itemize}

In the formulation of the algorithm, these factors are employed to determine the state transformation as mentioned below: \

-- \underline{State initialization}: In the primary step, the state is initialized with $Z_0 = (y_0,f_0)$ that will further be transformed to a new state relying on one of the above three factors. 

\

-- \underline{Vertex addition($+v$)} : This step is executed when a vertex with a positive(+) prefix is detected in the composition order sequence. The current state $z$ is transformed into a new state $z'$ after this step has been executed. $Z$ denotes set of current states and $Z'$ contains all correct existing described pairs ($y,f$) with $x$ as an index and $f$ an allowable value for the given problem. The vertex addition process is defined as a mapping 

\[ \delta_{v} : Z \xrightarrow[]{} 2^{Z}\]

the transformed set of states $Z'$ in terms of the preceding set of states $Z$ is shown as

\[ Z' = \bigcup_{z\in Z} \delta_{v}(z) \] \

-- \underline{Vertex deletion($-v$)} : This step is executed when a vertex with a negative(-) prefix is detected in the composition order sequence. The current state $z$ is transformed to a new state $z'$ after this step has functioned. $Z$ denotes set of current states and $Z'$ contains all correct existing described pairs ($y,f$) with $x$ as an index and $f$ an allowable value for the given problem. The vertex deletion process is defined as a mapping 

\[ \eta_{v} : Z \xrightarrow[]{} 2^{Z}\]

the transformed set of states $Z'$ in terms of the preceding set of states $Z$ is shown as

\[ Z' = \bigcup_{z\in Z} \eta_{v}(z) \] \

-- \underline{Edge insertion($e$)} : This step is executed while processing an edge with endpoints $u$ and $v$ in the composition order sequence. The current state $z$ is transformed into a new state $z'$ after this step has been executed. $Z$ denotes set of current states and $Z'$ contains all correct existing described pairs ($y,f$) with $x$ as an index and $f$ an allowable value for the given problem. The edge insertion process is defined as a mapping \[ \beta_e : Z \xrightarrow[]{} 2^{Z}\]
the transformed set of states $Z'$ in terms of the preceding set of states $Z$ is shown as

\[ Z' = \bigcup_{z\in Z} \beta_e(z) \] \

There is a possibility of an instance where multiple states $z_1, z_2$ have an identical index on 
performing any of the above three steps. Depending on the polynomial type that is computed, such multiple states are replaced with a single state having that common index.\\

\textbf{Remark} : The three maps $\delta_{v}$, $\eta_{v}$ and $\beta_e$ varies according to the polynomial

\subsection{Independence polynomial}

To delve into the concept of independence polynomial \cite{ind}, it is required to go through the notion of independent sets. For a given graph $G = (V, E)$, a vertex subset $V' \subseteq V$ is termed an independent set if the graph induced by $V'$ that is G[$V'$] is empty. The independent set with the maximum cardinality is defined as the maximum independent set, and the cardinality of such a maximum independent set is described as the independence number, denoted by $\alpha(G)$.  \

\begin{definition}
For a given graph $G = (V, E)$, the independence polynomial in terms of independent sets and independence number is defined as \[ I(G,x) = \sum_{j=0}^{\alpha(G)} m_j(G)x^{j} \] \ 
where the coefficients $m_j(G)$ gives the total number of independent sets with cardinality $j$,  $\forall \hspace{1mm} j \in \{0,\cdots,\alpha(G)\}$.\
\end{definition}

For the graph mentioned in Figure 6, below are the values for the maximum independent set(s), \ 

independence number and independence polynomial. \\

\textbf{Maximum independent sets}: $\{v_1,v_4\},\{v_1,v_5\},\{v_1,v_6\},\{v_3,v_4\},\{v_3,v_5\},\{v_3,v_6\},\{v_2,v_5\},\{v_2,v_6\}$ \\

\textbf{Independence number}: 2 \\ 

\textbf{Independence polynomial} : $1+6x+8x^{2}$ \

\begin{lemma}
For a graph G with order n, $n+1 \leq nis(G) \leq 2^{n}$, where nis(G) denotes the total number of independent sets of G.
\end{lemma}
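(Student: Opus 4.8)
The plan is to establish the two inequalities separately, since they rely on completely different ideas: the upper bound is a trivial counting observation, while the lower bound requires identifying enough independent sets.

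For the upper bound, I would argue that every independent set is in particular a subset of $V$, so the collection of independent sets injects into the power set $2^V$, which has cardinality $2^n$. Hence $\mathrm{nis}(G) \le 2^n$, with equality exactly when $G$ has no edges (the empty graph). This step is a one-liner and presents no obstacle.

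For the lower bound, the key observation is that independence is a \emph{hereditary} property: every subset of an independent set is independent, and in particular the empty set and every singleton $\{v\}$ for $v \in V$ are independent. This already gives $1 + n$ distinct independent sets (the empty set plus the $n$ singletons), hence $\mathrm{nis}(G) \ge n+1$. I would write this out carefully, noting that the $n$ singletons are pairwise distinct and distinct from $\emptyset$, so no overcounting occurs. It may also be worth remarking when equality $\mathrm{nis}(G) = n+1$ holds, namely precisely when $G = K_n$ is complete (so that no set of two or more vertices is independent); this is not required by the statement but rounds out the picture and parallels the equality discussion for the upper bound.

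I do not expect any genuine obstacle here — the statement is elementary. The only point requiring a word of care is the degenerate case $n = 0$ (the empty graph on no vertices), where $\mathrm{nis}(G) = 1$ and the claimed bound $n+1 = 1$ still holds, with $\emptyset$ as the sole independent set; for $n \ge 1$ the argument above applies verbatim. Thus the whole proof amounts to: (i) independent sets form a subfamily of $2^V$, giving the upper bound; (ii) $\emptyset$ and all singletons are independent, giving the lower bound.
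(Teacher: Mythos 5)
Your proof is correct, and it takes a cleaner logical route than the paper's own argument. The paper reasons via extremal graph classes: it computes $\mathrm{nis}(K_n) = n+1$ and $\mathrm{nis}(E_n) = 2^n$ and asserts that these serve as the lower and upper bounds for all graphs of order $n$. That argument implicitly relies on a monotonicity fact --- that adding edges can never increase the number of independent sets --- which the paper never states or justifies. Your version avoids this entirely by working directly with an arbitrary graph $G$: the empty set and the $n$ singletons are always independent (hereditarily, every subset of an independent set is independent, and singletons induce edgeless subgraphs), giving the lower bound, and every independent set is a subset of $V$, giving the upper bound. What the paper's approach buys is an immediate identification of the graphs attaining each bound; what yours buys is a proof that actually covers every graph without an unstated lemma, plus the careful handling of the degenerate case $n=0$ and the observation that equality holds exactly for $K_n$ and $E_n$ respectively. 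Your argument is the one I would keep.
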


\begin{proof}
For obtaining the lower and upper bounds for $nis(G)$, the class of graphs with the most and least number of edges, i.e., complete graphs($K_n$) and empty graphs($E_n$) respectively are considered. Since in $K_n$, every vertex is adjacent to every other vertex, it gives the lower bound for $nis(G)$ as $n+1$. In the case of $E_n$, since there are no edges, so any set of vertices with cardinality in the range [$0,n$] are considered that renders the power set returning the upper bound as $2^{n}$. 
\end{proof}

For the computation of independence polynomial based on above mentioned algorithmic approach, below are the required mappings :\\

\begin{table}[H]
\begin{tabular}{|p{4cm}|p{11cm}|}
\hline
\begin{center}
     Initial state ($Z_0$) 
\end{center}& \begin{center}
   $\{(\phi,1)\}$ 
\end{center} \\
\hline
\begin{center}
    Vertex addition(+$v$) 
\end{center}& \begin{center}
    $\delta_{v}((y,f)) = \{(y,f),(y+v,xf)\} $
\end{center}   \\
\hline
\begin{center}
    Vertex deletion(-$v$) 
\end{center}& \begin{center}
    $\eta_{v}((y,f)) = \{(y-v,f)\} $ 
\end{center}  \\
\hline
\vspace{6mm}

\begin{center}
    Edge insertion($e$)
\end{center}

    &
    
    \begin{center}
        $\beta_{e}((y,f)) = 
    \begin{cases}
      (y,f), & \text{if}\ e = \{u,v\} \not\subseteq y \\
      \phi, & \text{otherwise}
    \end{cases} $

    \end{center}

 \vspace{9mm}
 
    \\
\hline
\end{tabular}
\caption{Maps for computation of independence polynomial}
\end{table}

\textbf{Explanation}: $y + v$ denotes $y \cup\{v\} $ and $y- v$ represents deletion of vertex from the set. Indices are in the form of sets. Since the independence polynomial of the null graph is 1, thereby gives the initialized state as ($\phi,1$). Two states $z_1,z_2$ having identical index $y$ given by $(y,f_1)$ and $(y,f_2)$ respectively is replaced by $(y,f_1+f_2)$. In the composition order sequence, while processing an edge, we have two cases as by the definition of independent sets, there exists no edge between any two vertices in the independent set. Hence, if an edge is a subset of the value $y$, those particular sets containing that edge are removed, and the remaining are retained.

\begin{table}[H]
\begin{tabular}{|p{1.5cm}|p{14cm}|}
\hline
Step & States \\
\hline
init & ($\phi$,1) \\
\hline
+$v_1$ & ($\phi$,1),(\{$v_1$\},$x$)\\
\hline
+$v_3$ & ($\phi$,1),(\{$v_1$\},$x$),(\{$v_3$\},$x$),(\{$v_1,v_3$\},$x^{2})$\\
\hline
$\{v_1,v_3\}$ & ($\phi$,1),(\{$v_1$\},$x$),(\{$v_3$\},$x$)\\
\hline
+$v_2$ & ($\phi$,1),(\{$v_1$\},$x$),(\{$v_3$\},$x$),(\{$v_2$\},$x$),(\{$v_1,v_2$\},$x^{2}$),(\{$v_2,v_3$\},$x^{2})$\\
\hline
$\{v_1,v_2\}$ & ($\phi$,1),(\{$v_1$\},$x$),(\{$v_3$\},$x$),(\{$v_2$\},$x$),(\{$v_2,v_3$\},$x^{2}$)\\
\hline
$\{v_2,v_3\}$ & ($\phi$,1),(\{$v_1$\},$x$),(\{$v_3$\},$x$),(\{$v_2$\},$x$)\\
\hline
-$v_1$ & ($\phi$,$1+x$),(\{$v_3$\},$x$),(\{$v_2$\},$x$)\\
\hline
-$v_3$ & ($\phi$,$1+2x$),(\{$v_2$\},$x$)\\
\hline
+$v_4$ & ($\phi$,$1+2x$),(\{$v_2$\},$x$),(\{$v_4$\},$x+2x^{2}$),(\{$v_2,v_4$\},$x^{2})$\\
\hline
$\{v_2,v_4\}$ & ($\phi$,$1+2x$),(\{$v_2$\},$x$),(\{$v_4$\},$x+2x^{2})$\\
\hline
-$v_2$ & ($\phi$,$1+3x$),(\{$v_4$\},$x+2x^{2}$)\\
\hline
+$v_5$ & ($\phi$,$1+3x$),(\{$v_4$\},$x+2x^{2}$),(\{$v_5$\},$x+3x^{2}$),(\{$v_4,v_5$\},$x^{2}+2x^{3}$)\\
\hline
$\{v_4,v_5\}$ & ($\phi$,$1+3x$),(\{$v_4$\},$x+2x^{2}$),(\{$v_5$\},$x+3x^{2}$)\\
\hline
+$v_6$ & ($\phi$,$1+3x$),(\{$v_4$\},$x+2x^{2}$),(\{$v_5$\},$x+3x^{2}$),(\{$v_6$\},$x+3x^{2}$),(\{$v_4,v_6$\},$x^{2}+2x^{3}$),(\{$v_5,v_6$\},$x^{2}+3x^{3}$)\\
\hline
$\{v_4,v_6\}$ & ($\phi$,$1+3x$),(\{$v_4$\},$x+2x^{2}$),(\{$v_5$\},$x+3x^{2}$),(\{$v_6$\},$x+3x^{2}$),(\{$v_5,v_6$\},$x^{2}+3x^{3}$)\\
\hline
-$v_4$ & ($\phi$,$1+4x+2x^{2}$),(\{$v_5$\},$x+3x^{2}$),(\{$v_6$\},$x+3x^{2}$),(\{$v_5,v_6$\},$x^{2}+3x^{3}$)\\
\hline
$\{v_5,v_6\}$ & ($\phi$,$1+4x+2x^{2}$),(\{$v_5$\},$x+3x^{2}$),(\{$v_6$\},$x+3x^{2}$)\\
\hline
-$v_5$ & ($\phi$,$1+5x+5x^{2}$),(\{$v_6$\},$x+3x^{2}$)\\
\hline
-$v_6$ & ($\phi$,$\boldsymbol{1+6x+8x^{2}}$)\\
\hline
\end{tabular}
\caption{Computation of independence polynomial for graph in Figure 6}
\end{table}

\subsection{Chromatic polynomial}

Chromatic polynomial was introduced by George David Birkhoff \cite{birkhoff1912determinant}. Following are some relevant definitions w.r.t. the chromatic polynomial.

\begin{definition}
For a given graph $G = (V,E)$, a \textbf{vertex coloring} is defined as a mapping $Z : V \xrightarrow[]{} C$ that allots a color to vertex from $C$ a finite set of colors. In addition, a coloring $Z$ is termed as \textbf{proper} if for any edge $e = \{a,b\} \in E$, $Z(a) \neq Z(b)$ $\forall \hspace{1mm} a,b \in V$.
\end{definition}

\begin{definition}
For a given graph $G = (V, E)$, \textbf{chromatic number} denoted by $\chi(G)$ is defined as the minimum number of colors needed to properly color all the vertices in the vertex set.
\end{definition}

For a given graph $G = (V,E)$, chromatic polynomial $P(G,x)$ is a unique polynomial with degree $|V|$ that gives the number of ways of properly coloring all vertices of G with a color set of cardinality x, where x $\in$ N.\\

For the graph mentioned in Figure 6 below are the values for chromatic number and chromatic polynomial.

\textbf{Chromatic number}: 3 \ 

\textbf{Chromatic polynomial} : $x^{6}-7x^{5}+19x^{4}-25x^{3}+16x^{2}-4x$ 

\begin{lemma}
For a graph G with order n, $\chi(G) \leq n -\alpha(G)+1$.
\end{lemma}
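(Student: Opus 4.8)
\section*{Proof proposal for Lemma (\texorpdfstring{$\chi(G) \leq n - \alpha(G) + 1$}{chi(G) <= n - alpha(G) + 1})}

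The plan is to exhibit an explicit proper coloring of $G$ that uses exactly $n - \alpha(G) + 1$ colors, which by the definition of the chromatic number immediately forces $\chi(G) \leq n - \alpha(G) + 1$. First I would invoke the existence of a maximum independent set: let $S \subseteq V$ with $|S| = \alpha(G)$ be such that $G[S]$ is empty. Write $V \setminus S = \{w_1, \dots, w_{n - \alpha(G)}\}$, so that $|V \setminus S| = n - \alpha(G)$.

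Next I would define the coloring $Z : V \to C$ with $C = \{c_0, c_1, \dots, c_{n - \alpha(G)}\}$ by assigning $Z(v) = c_0$ for every $v \in S$, and $Z(w_j) = c_j$ for each $j \in \{1, \dots, n - \alpha(G)\}$. The number of colors used is $|C| = n - \alpha(G) + 1$.

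Then I would verify that $Z$ is proper by checking every edge $e = \{a,b\} \in E$. There are three cases. If both $a, b \in S$, then $e$ would be an edge inside the independent set $S$, contradicting that $G[S]$ is empty, so this case does not occur. If exactly one endpoint lies in $S$, say $a \in S$ and $b = w_j$, then $Z(a) = c_0 \neq c_j = Z(b)$. If neither endpoint lies in $S$, say $a = w_i$ and $b = w_j$ with $i \neq j$ (they are distinct vertices), then $Z(a) = c_i \neq c_j = Z(b)$. In all cases $Z(a) \neq Z(b)$, so $Z$ is a proper coloring.

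Since $G$ admits a proper coloring with $n - \alpha(G) + 1$ colors and $\chi(G)$ is the minimum number of colors over all proper colorings, we conclude $\chi(G) \leq n - \alpha(G) + 1$. There is no real obstacle here; the only point requiring a moment of care is the case analysis on edges, and in particular observing that the ``both endpoints in $S$'' case is vacuous precisely because $S$ is independent — this is where the hypothesis that $S$ has size $\alpha(G)$ (indeed, merely that $S$ is independent) is used.
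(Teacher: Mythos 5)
Your proof is correct and follows essentially the same approach as the paper: color a maximum independent set with one shared color and give each of the remaining $n-\alpha(G)$ vertices its own color. Your version is simply a more careful write-up, making the coloring explicit and verifying properness edge by edge, whereas the paper states the same idea more informally.
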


\begin{proof}
Since by the definition of independent sets, there exist no edges between any two vertices in that set. Hence, it gives the chromatic number for that set as 1. Now the remaining vertices can be properly colored by a maximum of $n-\alpha(G)$ colors if the graph induced by the remaining vertices is isomorphic to $K_{n-\alpha(G)}$. It gives the upper bound for the chromatic number $\chi(G)$ as $n -\alpha(G)+1$. 
\end{proof}

For the computation of chromatic polynomial based on above mentioned algorithmic approach, below are the required mappings:\\

\begin{table}[H]
\begin{tabular}{|p{3.5cm}|p{13cm}|}
\hline
\begin{center}
    Initial state ($Z_0$) 
\end{center}
 &  \begin{center}
     $\{(\phi,1)\}$
 \end{center}\\
\hline
\begin{center}
    Vertex addition(+$v$) 
\end{center}& \begin{center}
    
$\delta_{v}((y,f)) = \{(y|v,(x-|y|)f)\}\cup \{((y \smallsetminus B)\cup \{B\cup \{v\}\},f):B \in y\} $   \end{center}\\
\hline
\begin{center}
   Vertex deletion(-$v$)  
\end{center}& \begin{center}
   $\eta_{v}((y,f)) = \{(y-v,f)\} $  
\end{center}  \\
\hline
\vspace{6mm}

\begin{center}
    Edge insertion($e$)
\end{center}

    &
    \begin{center}
        $\beta_{e}((y,f)) = 
    \begin{cases}
      \phi, & \text{if there exists a block B $\in$ y}\ \text{and } e=\{u,v\} \subseteq B,  \\
      \{(y,f)\}, & \text{otherwise}
    \end{cases} $ 
    \end{center} 

\vspace{9mm}
 
    \\
\hline
\end{tabular}
\caption{Maps for computation of chromatic polynomial}

\end{table}

\textbf{Explanation}: Indices are considered in the form of a partition. Since the chromatic polynomial of a null graph is 1, thereby gives the initialized state as ($\phi,1$). Two states $z_1,z_2$ having identical index $y$ given by $(y,f_1)$ and $(y,f_2)$ respectively is replaced by $(y,f_1+f_2)$. While considering vertex addition, the vertex is added to each of the existing blocks and also to a separate block. In the case of edge insertion, two cases are considered. While processing an edge $e = uv$ (in shorter form), if $u$ and $v$ are present in the same block, then that partition is removed as per the definition of proper coloring.

\begin{longtable}{|p{1cm}|p{14cm}|}
\hline
Step & States \\
\hline
init & ($\phi$,1) \\
\hline
+$v_1$ & ($v_1$,$x$)\\
\hline
+$v_3$ & ($v_1|v_3$,$x^{2}-x$),($v_1v_3$,$x$)\\
\hline
$v_1v_3$ & ($v_1|v_3$,$x^{2}-x$)\\
\hline
+$v_2$ & ($v_1v_2|v_3$,$x^{2}-x$), ($v_1|v_3v_2$,$x^{2}-x$),($v_1|v_3|v_2$,$x^{3}-3x^{2}+2x$) \\
\hline
$v_1v_2$ & ($v_1|v_3v_2$,$x^{2}-x$),($v_1|v_3|v_2$,$x^{3}-3x^{2}+2x$) \\
\hline
$v_2v_3$ & ($v_1|v_3|v_2$,$x^{3}-3x^{2}+2x$) \\
\hline
-$v_1$ & ($v_3|v_2$,$x^{3}-3x^{2}+2x$)\\
\hline
-$v_3$ & ($v_2$,$x^{3}-3x^{2}+2x$)\\
\hline
+$v_4$ & ($v_2|v_4$,$x^{4}-4x^{3}+5x^{2}-2x$),($v_2v_4$,$x^{3}-3x^{2}+2x$)\\
\hline
$v_2v_4$& ($v_2|v_4$,$x^{4}-4x^{3}+5x^{2}-2x$)\\
\hline
-$v_2$ & ($v_4$,$x^{4}-4x^{3}+5x^{2}-2x$)\\
\hline
+$v_5$ & ($v_4|v_5$,$x^{5}-5x^{4}+9x^{3}-7x^{2}+2x$),($v_4v_5$,$x^{4}-4x^{3}+5x^{2}-2x$)\\
\hline
$v_4v_5$ & ($v_4|v_5$,$x^{5}-5x^{4}+9x^{3}-7x^{2}+2x$)\\
\hline
+$v_6$ & ($v_4v_6|v_5$,$x^{5}-5x^{4}+9x^{3}-7x^{2}+2x$),($v_4|v_5v_6$,$x^{5}-5x^{4}+9x^{3}-7x^{2}+2x$),\ 

($v_4|v_5|v_6$,$x^{6}-7x^{5}+19x^{4}-25x^{3}+16x^{2}-4x$)\\
\hline
$v_4v_6$ & ($v_4|v_5v_6$,$x^{5}-5x^{4}+9x^{3}-7x^{2}+2x$),
($v_4|v_5|v_6$,$x^{6}-7x^{5}+19x^{4}-25x^{3}+16x^{2}-4x$)\\
\hline
-$v_4$ & ($v_5v_6$,$x^{5}-5x^{4}+9x^{3}-7x^{2}+2x$),
($v_5|v_6$,$x^{6}-7x^{5}+19x^{4}-25x^{3}+16x^{2}-4x$)\\
\hline
$v_5v_6$ & ($v_5|v_6$,$x^{6}-7x^{5}+19x^{4}-25x^{3}+16x^{2}-4x$)\\
\hline
-$v_5$ & ($v_6$,$x^{6}-7x^{5}+19x^{4}-25x^{3}+16x^{2}-4x$)\\
\hline
-$v_6$ & $\boldsymbol{x^{6}-7x^{5}+19x^{4}-25x^{3}+16x^{2}-4x}$\\
\hline
\end{longtable}

\begin{center}
Table 6: Computation of chromatic polynomial for the graph in Figure 6
\end{center}

\subsection{Domination polynomial}

For a given graph $G = (V, E)$, below are the required notations for the domination polynomial \cite{dom}. \

\textbf{Open neighborhood} ($N_{G}(v)$) : $\forall \hspace{1mm} v \in V$, $N(v)$ = \{$u \in V$ $|$ \{$u,v$\} $\in E$ \} \ 

\textbf{Closed neighborhood} ($N_{G}[v]$) : $N_{G}(v) \cup \{v\}$ \\

A vertex subset $V' \subseteq V$ is termed a dominating set if $N_{G}[V'] = V$. The dominating set with the minimum cardinality is defined as the minimum dominating set, and the cardinality of such a minimum dominating set is described as the domination number, which is denoted by $\gamma(G)$.  

\begin{definition}
For a given graph $G = (V, E)$, the domination polynomial in terms of dominating sets and domination number is defined as \[ D(G,x) = \sum_{j=\gamma(G)}^{|V|} d_j(G)x^{j} \] \ 
where the coefficients $d_j(G)$ gives the total number of dominating sets with cardinality $j$,  $\forall \hspace{1mm} j \in \{\gamma(G),\cdots,|V|\}$.\
\end{definition}

For the graph mentioned in Figure 6, below are the values for the maximum dominating set(s), domination number, and domination polynomial. \\

\textbf{Minimum dominating sets}: $\{v_1,v_4\},\{v_1,v_5\},\{v_1,v_6\},\{v_3,v_4\},\{v_3,v_5\},\{v_3,v_6\},\{v_2,v_4\},\{v_2,v_5\},\{v_2,v_6\}$ 

\textbf{Domination number}: 2 \ 

\textbf{Domination polynomial} : $9x^{2}+18x^{3}+15x^{4}+6x^{5}+x^{6}$ \

\begin{lemma}
 For a graph G = (V, E), if $Y \subseteq V$ is non-dominating, then $V \setminus Y$ is dominating in $\overline{G}$.
\end{lemma}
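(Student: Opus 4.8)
The plan is to chase definitions directly, using the complement relation $uv \in E(\overline{G}) \iff u \neq v \text{ and } uv \notin E$. First I would unpack the hypothesis: saying $Y$ is non-dominating in $G$ means $N_G[Y] \neq V$, so there is a \emph{witness} vertex $v \in V$ with $v \notin N_G[Y]$. Spelling this out, $v \notin Y$ (hence $v \in V \setminus Y$) and $v$ has no neighbour in $Y$ with respect to $G$, i.e. for every $w \in Y$ we have $w \neq v$ and $vw \notin E$.

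Next I would verify that $V \setminus Y$ dominates $\overline{G}$, i.e. $N_{\overline{G}}[V \setminus Y] = V$, by taking an arbitrary $w \in V$ and splitting into two cases. If $w \in V \setminus Y$, then trivially $w \in N_{\overline{G}}[V \setminus Y]$. If $w \in Y$, I would invoke the witness $v$: from the previous paragraph $v \in V \setminus Y$, $v \neq w$, and $vw \notin E$, so $vw \in E(\overline{G})$, which gives $w \in N_{\overline{G}}(v) \subseteq N_{\overline{G}}[V \setminus Y]$. In either case $w \in N_{\overline{G}}[V \setminus Y]$, so $V \setminus Y$ is a dominating set of $\overline{G}$.

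I would also note for completeness that this argument implicitly produces $V \setminus Y \neq \emptyset$ (the witness $v$ lies in it), consistent with the observation that $V$ itself is always dominating in $G$, so a non-dominating $Y$ must be a proper subset of $V$. There is no real obstacle here: the only point requiring a little care is the case $w \in Y$, where one must check that the single witness $v$ simultaneously handles \emph{every} vertex of $Y$ — which it does, precisely because $v$ being outside $N_G[Y]$ is a statement quantified over all of $Y$. The closed-neighbourhood bookkeeping (ensuring $v \neq w$ so that the $\overline{G}$-edge genuinely exists) is the other detail to state explicitly.
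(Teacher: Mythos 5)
Your proof is correct and follows essentially the same route as the paper's: both extract a witness vertex outside $N_G[Y]$, observe that it is joined to all of $Y$ in $\overline{G}$, and note that vertices of $V \setminus Y$ dominate themselves. Your version is simply a more carefully quantified write-up of the same argument.
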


\begin{proof}
Since Y is non-dominating in G, it implies that $\exists \hspace{0.6mm} u \in V \setminus Y$ such that $u$ is not connected to any vertex of $Y$ in G. Hence, there is an edge between all vertices of $Y$ and $u$ in $\overline{G}$. So in $\overline{G}$, all vertices of $Y$ are dominated by $u \in V \setminus Y$ and every such vertex of $V$ belonging to $V \setminus Y$.
\end{proof}

For the computation of the domination polynomial based on above mentioned algorithmic approach, below are the required mappings:\\

\begin{tabular}{|p{4.5cm}|p{11.5cm}|}
\hline
 \begin{center}
     Initial state ($Z_0$) 
 \end{center}&  \begin{center}
     $\{([\phi,\phi,\phi],1)\}$
 \end{center}\\
\hline
\begin{center}
    Vertex addition(+$v$) 
\end{center}& \begin{center}
    $\delta_{v}(([D,E,F],f)) = \{([D+v,E,F],f),([D,E,F+v],xf)\} $ 
\end{center}  \\
\hline

\vspace{6mm}

\begin{center}
    Vertex deletion(-$v$) 
\end{center}
&  

\begin{center}
    $\eta_{v}(([D,E,F],f)) = 
    \begin{cases}
      \{([D,E-v,F-v],f)\}, & \text{if}\ v \notin D \\
      \phi, & \text{otherwise}
    \end{cases} $

\end{center}
    \vspace{9mm}
    \\
\hline

\vspace{9mm}
\begin{center}
    Edge insertion($e=\{u,v\}$)
\end{center}
    &
    \begin{center}
$\beta_{e}(([D,E,F],f)) = 
    \begin{cases}
      \{([D-u,E+u,F],f)\}, & \text{if}\ v \in F \hspace{1mm} \text{and} \hspace{1mm} u \in D  \\
      \{([D-v,E+v,F],f)\}, & \text{if}\ u \in F \hspace{1mm} and \hspace{1mm} v \in D  \\
      \{([D,E,F],f)\}, & \text{otherwise}
    \end{cases} $ \end{center} 

 \vspace{9mm}
    \\
\hline
\end{tabular}

\begin{center}
Table 7: Maps for computation of domination polynomial
\end{center}
\vspace{2mm}

\textbf{Explanation}: $y + v$ denotes $y \cup\{v\} $ and $y- v$ represents the deletion of a vertex from the set for y as D or E, or F. Indices are in the form of sets. In the case of the domination polynomial, three scenarios are observed. Firstly, the vertex neither belongs to the dominating set nor the vertex is adjacent to any vertex from the dominating set. Such a vertex is an uncovered vertex. In the second case, the vertex does not belong to the dominating set but is adjacent to a vertex in the dominating set, and such a vertex is a covered vertex. The last case involves a vertex that belongs to the dominating set, and the vertex is a dominating vertex. Indices are considered in the form of a triple consisting of three sets, one for each of the categories. \ 

Since the domination polynomial of a null graph is 1, thereby gives the initialized state as ($[\phi,\phi,\phi],1$). Two states $z_1, z_2$ having identical index $y$ given by $(y,f_1)$ and $(y,f_2)$ respectively is replaced by $(y,f_1+f_2)$.

\begin{longtable}{|p{1.5cm}|p{14cm}|}
\hline
Step & States \\
\hline
init & ([$\phi$,$\phi$,$\phi$],1) \\
\hline
+$v_1$ & ([\{$v_1$\},$\phi$,$\phi$],1),([$\phi$,$\phi$,\{$v_1$\}],$x$)\\
\hline
+$v_3$ & ([\{$v_1,v_3$\},$\phi$,$\phi$],1),([\{$v_1$\},$\phi$,\{$v_3$\}],$x$),([\{$v_3$\},$\phi$,\{$v_1$\}],$x$),([$\phi$,$\phi$,\{$v_1,v_3$\}],$x^{2}$)\\
\hline
$\{v_1,v_3\}$ & ([\{$v_1,v_3$\},$\phi$,$\phi$],1),([$\phi$,\{$v_1$\},\{$v_3$\}],$x$),([$\phi$,\{$v_3$\},\{$v_1$\}],$x$),([$\phi$,$\phi$,\{$v_1,v_3$\}],$x^{2}$)\\

\hline 
+$v_2$ & ([\{$v_1,v_3,v_2$\},$\phi$,$\phi$],1),([\{$v_1,v_3$\},$\phi$,\{$v_2$\}],$x$),([\{$v_2$\},\{$v_1$\},\{$v_3$\}],$x$),([$\phi$,\{$v_1$\},\{$v_3,v_2$\}],$x^{2}$),\

([\{$v_2$\},\{$v_3$\},\{$v_1$\}],$x$),([$\phi$,\{$v_3$\},\{$v_1,v_2$\}],$x^{2}$),([\{$v_2$\},$\phi$,\{$v_1,v_3$\}],$x^{2}$),([$\phi$,$\phi$,\{$v_1,v_3,v_2$\}],$x^{3}$)\\
\hline

\vspace{0.5mm}

$\{v_1,v_2\}$& ([\{$v_1,v_3,v_2$\},$\phi$,$\phi$],1),([\{$v_3$\},\{$v_1$\},\{$v_2$\}],$x$),([\{$v_2$\},\{$v_1$\},\{$v_3$\}],$x$),([$\phi$,\{$v_1$\},\{$v_3,v_2$\}],$x^{2}$),\

([$\phi,$\{$v_2$,$v_3$\},\{$v_1$\}],$x$),([$\phi$,\{$v_3$\},\{$v_1,v_2$\}],$x^{2}$),($\phi$,\{$v_2$\},\{$v_1,v_3$\}],$x^{2}$),([$\phi$,$\phi$,\{$v_1,v_3,v_2$\}],$x^{3}$)\\
\hline

\vspace{0.5mm}

$\{v_2,v_3\}$ & ([\{$v_1,v_3,v_2$\},$\phi$,$\phi$],1),([$\phi$,\{$v_1$,$v_3$\},\{$v_2$\}],$x$),([$\phi$,\{$v_1,v_2$\},\{$v_3$\}],$x$),([$\phi$,\{$v_1$\},\{$v_3,v_2$\}],$x^{2}$),\

([$\phi,$\{$v_2$,$v_3$\},\{$v_1$\}],$x$),([$\phi$,\{$v_3$\},\{$v_1,v_2$\}],$x^{2}$),($\phi$,\{$v_2$\},\{$v_1,v_3$\}],$x^{2}$),([$\phi$,$\phi$,\{$v_1,v_3,v_2$\}],$x^{3}$)\\
\hline
-$v_1$ & ([$\phi$,\{$v_3$\},\{$v_2$\}],$x+x^{2}$),([$\phi$,\{$v_2$\},\{$v_3$\}],$x+x^{2}$),([$\phi$, $\phi$,\{$v_3,v_2$\}],$x^{2}+x^{3}$),([$\phi$,\{$v_3,v_2$\},$\phi$],$x$)\\
\hline
-$v_3$ & ([$\phi$,\{$v_2$\},$\phi$],$2x+x^{2}$),([$\phi$,$\phi$,\{$v_2$\}],$x+2x^{2}+x^{3}$)\\
\hline

\vspace{0.5mm}

+$v_4$ & ([\{$v_4$\},\{$v_2$\},$\phi$],$2x+x^{2}$),([$\phi$,\{$v_2$\},\{$v_4$\}],$2x^{2}+x^{3}$),([\{$v_4$\},$\phi$,\{$v_2$\}],$x+2x^{2}+x^{3}$),\ 

([$\phi$,$\phi$,\{$v_2,v_4$\}],$x^{2}+2x^{3}+x^{4}$)\\
\hline

\vspace{0.5mm}

$\{v_2,v_4\}$ & ([\{$v_4$\},\{$v_2$\},$\phi$],$2x+x^{2}$),([$\phi$,\{$v_2$\},\{$v_4$\}],$2x^{2}+x^{3}$),([$\phi$,\{$v_4$\},\{$v_2$\}],$x+2x^{2}+x^{3}$),\ 

([$\phi$,$\phi$,\{$v_2,v_4$\}],$x^{2}+2x^{3}+x^{4}$)\\
\hline
-$v_2$ & ([\{$v_4$\},$\phi$,$\phi$],$2x+x^{2}$),([$\phi$,$\phi$,\{$v_4$\}],$3x^{2}+3x^{3}+x^{4}$),([$\phi$,\{$v_4$\},$\phi$],$x+2x^{2}+x^{3}$),\\
\hline

\vspace{0.5mm}

+$v_5$ & ([\{$v_4$,$v_5$\},$\phi$,$\phi$],$2x+x^{2}$),([\{$v_4$\},$\phi$,\{$v_5$\}],$2x^{2}+x^{3}$),([\{$v_5$\},$\phi$,\{$v_4$\}],$3x^{2}+3x^{3}+x^{4}$),\ 

([$\phi$,$\phi$,\{$v_4,v_5$\}],$3x^{3}+3x^{4}+x^{5}$),([\{$v_5$\},\{$v_4$\},$\phi$],$x+2x^{2}+x^{3}$),([$\phi$,\{$v_4$\},\{$v_5$\}],$x^{2}+2x^{3}+x^{4}$)\\
\hline

\vspace{0.5mm}

$\{v_4,v_5\}$ & ([\{$v_4$,$v_5$\},$\phi$,$\phi$],$2x+x^{2}$),([$\phi$,\{$v_4$\},\{$v_5$\}],$3x^{2}+3x^{3}+x^{4}$), ([$\phi$,$\phi$,\{$v_4,v_5$\}],$3x^{3}+3x^{4}+x^{5}$),\ 

([\{$v_5$\},\{$v_4$\},$\phi$],$x+2x^{2}+x^{3}$),([$\phi$,\{$v_5$\},\{$v_4$\}],$3x^{2}+3x^{3}+x^{4}$)\\
\hline

\vspace{3.9mm}

+$v_6$ & ([\{$v_4$,$v_5$,$v_6$\},$\phi$,$\phi$],$2x+x^{2}$),([\{$v_4$,$v_5$\},$\phi$,
\{$v_6$\}],$2x^{2}+x^{3}$),([\{$v_6$\},\{$v_4$\},\{$v_5$\}],$3x^{2}+3x^{3}+x^{4}$),\ 

([$\phi$,\{$v_4$\},\{$v_5$,$v_6$\}],$3x^{3}+3x^{4}+x^{5}$), ([\{$v_6$\},$\phi$,\{$v_4,v_5$\}],$3x^{3}+3x^{4}+x^{5}$),\ 

([$\phi$,$\phi$,\{$v_4,v_5,v_6$\}],$3x^{4}+3x^{5}+x^{6}$),([\{$v_5,v_6$\},\{$v_4$\},$\phi$],$x+2x^{2}+x^{3}$),\ 

([\{$v_5$\},\{$v_4$\},\{$v_6$\}],$x^{2}+2x^{3}+x^{4}$),([\{$v_6$\},\{$v_5$\},\{$v_4$\}],$3x^{2}+3x^{3}+x^{4}$),\ 

([$\phi$,\{$v_5$\},\{$v_4,v_6$\}],$3x^{3}+3x^{4}+x^{5}$)\\
\hline

\vspace{3.2mm}
$\{v_4,v_6\}$ & ([\{$v_4$,$v_5$,$v_6$\},$\phi$,$\phi$],$2x+x^{2}$),([\{$v_5$\},\{$v_4$\},\{$v_6$\}],$3x^{2}+3x^{3}+x^{4}$),([\{$v_6$\},\{$v_4$\},\{$v_5$\}],$3x^{2}+3x^{3}+x^{4}$),\ 

([$\phi$,\{$v_4$\},\{$v_5$,$v_6$\}],$3x^{3}+3x^{4}+x^{5}$), ([$\phi$,\{$v_6$\},\{$v_4,v_5$\}],$3x^{3}+3x^{4}+x^{5}$),\ 

([$\phi$,$\phi$,\{$v_4,v_5,v_6$\}],$3x^{4}+3x^{5}+x^{6}$),([\{$v_5,v_6$\},\{$v_4$\},$\phi$],$x+2x^{2}+x^{3}$),\ 

([$\phi$,\{$v_5,v_6$\},\{$v_4$\}],$3x^{2}+3x^{3}+x^{4}$), ([$\phi$,\{$v_5$\},\{$v_4,v_6$\}],$3x^{3}+3x^{4}+x^{5}$)\\
\hline

\vspace{3.3mm}

-$v_4$ & ([\{$v_5$\},$\phi$,\{$v_6$\}],$3x^{2}+3x^{3}+x^{4}$),([\{$v_6$\},$\phi$,\{$v_5$\}],$3x^{2}+3x^{3}+x^{4}$),\ 

([$\phi$,$\phi$,\{$v_5$,$v_6$\}],$3x^{3}+6x^{4}+4x^{5}+x^{6}$), ([$\phi$,\{$v_6$\},\{$v_5$\}],$3x^{3}+3x^{4}+x^{5}$),\ 

([\{$v_5,v_6$\},$\phi$,$\phi$],$x+2x^{2}+x^{3}$),([$\phi$,\{$v_5,v_6$\},$\phi$],$3x^{2}+3x^{3}+x^{4}$), \ 

([$\phi$,\{$v_5$\},\{$v_6$\}],$3x^{3}+3x^{4}+x^{5}$)\\
\hline

\vspace{2.5mm}
$\{v_5,v_6\}$ & ([$\phi$,\{$v_5$\},\{$v_6$\}],$3x^{2}+6x^{3}+4x^{4}+x^{5}$),([$\phi$,\{$v_6$\},\{$v_5$\}],$3x^{2}+6x^{3}+4x^{4}+x^{5}$),\ 

([$\phi$,$\phi$,\{$v_5$,$v_6$\}],$3x^{3}+6x^{4}+4x^{5}+x^{6}$), ([\{$v_5,v_6$\},$\phi$,$\phi$],$x+2x^{2}+x^{3}$),\ 

([$\phi$,\{$v_5,v_6$\},$\phi$],$3x^{2}+3x^{3}+x^{4}$)\\
\hline
-$v_5$ & ([$\phi$,$\phi$,\{$v_6$\}],$3x^{2}+9x^{3}+10x^{4}+5x^{5}+x^{6}$),([$\phi$,\{$v_6$\},$\phi$],$6x^{2}+9x^{3}+5x^{4}+x^{5}$)\\
\hline
-$v_6$ & ([$\phi$,$\phi$,$\phi$],$\boldsymbol{9x^{2}+18x^{3}+15x^{4}+6x^{5}+x^{6}}$)\\
\hline
\end{longtable}

\begin{center}
Table 8: Computation of domination polynomial for the graph in Figure 6
\end{center}

\subsection{Bipartition polynomial}
For a given graph $G = (V,E)$, below are the required notations for the bipartition polynomial \cite{dod2015bipartition}.\

\vspace{1mm}

\textbf{Edge boundary} ($\partial V'$): The edge boundary denoted as $\partial V'$ for a vertex subset $V' \subseteq V$ is \[ \partial V' = \{\{a,b\}\hspace{1mm} | \hspace{1mm} a \in V' and \hspace{1mm} b \in V \setminus V'\} \]

\begin{definition}
For a given graph $G = (V,E)$, bipartition polynomial in terms of edge boundary and open neighbourhood is defined as \[ B(G;x,y,z) = \sum_{V'\subseteq V} x^{|V'|} \sum_{E'\subseteq \partial V'} y^{|N_{(V,E')}(V')|} z^{|E'|} \] 

\end{definition}

Bipartition polynomial provides a generalization for various types of polynomials, including independence and domination polynomial \cite{ok2017bipartition}. Moreover, from the definition, it can be concluded that the sum of all coefficients of $x^{i}y^{j}z^{k}$ for $ \hspace{1mm} 0 \leq i,j \leq |V|$ and $ 0 \leq k \leq |E|$ gives the total number of bipartite subgraphs for $G$.

\begin{center}

    \begin{tikzpicture}  
  [scale=1.8,auto=center,every node/.style={circle,fill=transparent!20}] 
    
  \node (a1) at (-1.732/2,1/2) {$v_1$};  
  \node (a2) at (0,0)  {$v_2$}; 
  \node (a3) at (-1.732/2,-1/2)  {$v_3$};

  \draw (a1) -- (a2); 
  \draw (a2) -- (a3);  
  \draw (a1) -- (a3);

\end{tikzpicture}  
\end{center}

\begin{center}
Figure 7: Graph representation with 3 vertices and 3 edges
\end{center}

For the graph mentioned in Figure 7, below are the values for composition order as well as the bipartition polynomial.\

\textbf{Composition order:} ($+v_1,+v_3,\{v_1,v_3\},+v_2,\{v_1,v_2\},\{v_2,v_3\},-v_1,-v_3,-v_2$)

\textbf{Bipartition polynomial:} $1+3x+3x^{2}+x^{3}+6xyz+3xy^{2}z^{2}+6x^{2}yz+3x^{2}yz^{2}$\\ 
\begin{center}
\begin{tikzpicture}  
  [scale=1.5,auto=center,every node/.style={circle,fill=transparent!20}] 
    
  \node (a1) at (0,1) {$v_1$};  
  \node (a2) at (0,0)  {$v_2$}; 
  \node (a3) at (0.865,0.5)  {$v_3$};  
  
  \node (a4) at (2,1) {$v_1$};  
  \node (a5) at (2,0)  {$v_2$}; 
  \node (a6) at (2.865,0.5)  {$v_3$};  
  \node (a7) at (4,1) {$v_1$};  
  \node (a8) at (4,0)  {$v_3$}; 
  \node (a9) at (4.865,0.5)  {$v_2$}; 
  \node (a10) at (6,1) {$v_1$};  
  \node (a11) at (6,0)  {$v_3$}; 
  \node (a12) at (6.865,0.5)  {$v_2$}; 
  \node (a13) at (8,1) {$v_2$};  
  \node (a14) at (8,0)  {$v_3$}; 
  \node (a15) at (8.865,0.5)  {$v_1$};
  \node (a16) at (10,1) {$v_2$};  
  \node (a17) at (10,0)  {$v_3$}; 
  \node (a18) at (10.865,0.5)  {$v_1$};

  \draw (a2) -- (a3);  
  \draw (a4) -- (a6);
   \draw (a8) -- (a9);
   \draw (a10) -- (a12);
   \draw (a14) -- (a15);
   \draw (a16) -- (a18);
\end{tikzpicture}  
\end{center}
\begin{center}
Figure 8: Six instances of bipartite subgraphs representing the term $x^{2}yz$ \

\end{center}
\
\begin{center}
\begin{tikzpicture}  
  [scale=1.8,auto=center,every node/.style={circle,fill=transparent!20}] 
    
  \node (a1) at (0,1) {$v_1$};  
  \node (a2) at (0,0)  {$v_2$}; 
  \node (a3) at (-0.865,0.5)  {$v_3$};  
  
  \node (a4) at (-2,1) {$v_3$};  
  \node (a5) at (-2,0)  {$v_2$}; 
  \node (a6) at (-2.865,0.5)  {$v_1$};  
  \node (a7) at (-4,1) {$v_1$};  
  \node (a8) at (-4,0)  {$v_3$}; 
  \node (a9) at (-4.865,0.5)  {$v_2$};

  \draw (a2) -- (a3);  
  \draw (a1) -- (a3);
  \draw (a4) -- (a6);
   \draw (a5) -- (a6);
  \draw (a7) -- (a9);
  \draw (a8) -- (a9);
\end{tikzpicture}  
\end{center}
\begin{center}
Figure 9: Three instances of bipartite subgraphs representing the term $xy^{2}z^{2}$
\end{center}

\begin{lemma}
For a given graph G = (V,E), $|E| = \frac{1}{2}[xyz]B(G;x,y,z)$, where $[xyz]$ represents the coefficient of $xyz$ in the bipartition polynomial of G.
\end{lemma}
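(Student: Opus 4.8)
\section*{Proof proposal}

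The plan is to read off directly from the definition which terms of the double sum
\[ B(G;x,y,z) = \sum_{V'\subseteq V} x^{|V'|} \sum_{E'\subseteq \partial V'} y^{|N_{(V,E')}(V')|} z^{|E'|} \]
can contribute the monomial $x^{1}y^{1}z^{1}$, and then to count them. Since the exponent of $x$ is exactly $|V'|$ and the exponent of $z$ is exactly $|E'|$, a contribution to $[xyz]B$ is possible only when $|V'| = 1$ and $|E'| = 1$. So first I would fix $V' = \{v\}$ for a single vertex $v$, observe that $\partial\{v\}$ is precisely the set of edges incident with $v$, and note that choosing $E' \subseteq \partial\{v\}$ with $|E'| = 1$ means choosing one edge $e = \{v,b\}$ incident with $v$.

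Next I would verify that the $y$-exponent is forced: in the graph $(V, E')$ with $E' = \{\{v,b\}\}$, the open neighbourhood $N_{(V,E')}(\{v\})$ equals $\{b\}$, which has cardinality $1$. Hence every pair $(v, \{e\})$ with $e$ incident to $v$ contributes exactly $x^{1}y^{1}z^{1}$ and nothing else contributes to that monomial (the choice $E' = \emptyset$ with $|V'| = 1$ gives $x$, not $xyz$; any $|V'| \ge 2$ or $|E'| \ge 2$ gives too high a power of $x$ or $z$). Therefore $[xyz]B(G;x,y,z)$ equals the number of incidence pairs $(v,e)$ with $v \in V$, $e \in E$, and $v \in e$.

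Finally I would count these incidence pairs in the standard way: summing over vertices, $\sum_{v\in V}\deg(v)$ counts each edge $e = \{a,b\}$ exactly twice, once as $(a,e)$ and once as $(b,e)$, so the total is $2|E|$. Combining, $[xyz]B(G;x,y,z) = 2|E|$, which rearranges to $|E| = \tfrac{1}{2}[xyz]B(G;x,y,z)$.

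I do not expect a genuine obstacle here; the only point requiring care is the bookkeeping that confirms no other $(V',E')$ pairs reach the monomial $xyz$ and that the neighbourhood size is automatically $1$ in the relevant case — i.e.\ making the "only if" direction airtight rather than just exhibiting the contributing terms. Everything else is the degree-sum (handshake) identity.
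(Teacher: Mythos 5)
Your proposal is correct and follows essentially the same route as the paper's proof: both identify that the only contributions to $[xyz]$ come from a single vertex $V'=\{v\}$ paired with a single incident edge, and that each edge of $G$ is thereby counted exactly twice (once per endpoint, which is the paper's phrasing that the roles of the two sides can be swapped). Your version is somewhat more explicit in verifying directly from the definition that the $y$-exponent is forced to equal $1$, but the underlying count is identical, ending in the handshake identity $\sum_{v}\deg(v)=2|E|$.
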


\begin{proof}
The coefficient of $xyz$ in $B(G;x,y,z)$ gives the number of bipartite subgraphs $G'= (Y\cup Z,E')$  with $|Y|=|Z|=|E'|=1$. It implies that for each edge, one endpoint is in $Y$ and the other in $Z$. But each edge is considered exactly twice for the same edge, as the position of $Y$ and $Z$ can be swapped in the polynomial. It gives the number of edges for the original graph $G$ as $\frac{1}{2}[xyz]B(G;x,y,z)$.  
\end{proof}

For the computation of the bipartition polynomial based on above mentioned algorithmic approach, below are the required mappings:\\

\begin{tabular}{|p{4.5cm}|p{11cm}|}
\hline
\begin{center}
    Initial state ($Z_0$)
\end{center}
  &  \begin{center}
      $\{([\phi,\phi,\phi],1)\}$
  \end{center}\\
\hline
\begin{center}
    Vertex addition(+$v$)
\end{center}
 & \begin{center}
     $\delta_{v}(([D,E,F],f)) = \{([D,E+v,F],f),([D,E,F+v],xf)\} $   
 \end{center}\\
\hline
\begin{center}
    Vertex deletion(-$v$) 
\end{center}
& \begin{center}
    $\eta_{v}((y,f)) = \{(y-v,f)\} $
\end{center}\\
\hline
\vspace{17mm}
\begin{center}
    Edge insertion($e=\{u,v\}$)
\end{center}
    &
    \footnotesize
\begin{center}
    $\beta_{e}(([D,E,F],f)) = 
    \begin{cases}
      \{([D,E,F],f),([D+v,E-v,F],yzf)\}, & \\
      \hspace{10mm} \text{if}\ u \in F \hspace{1mm} \text{and} \hspace{1mm} v \in E  \\
      \{([D,E,F],f),([D+u,E-u,F],yzf)\}, & \\
      \hspace{10mm} \text{if}\ v \in F \hspace{1mm} \text{and} \hspace{1mm} u \in E  \\
      \{([D,E,F],f),([D,E,F],zf)\}, & \\
      \hspace{1mm} \text{if either }\ (u \in F \hspace{1mm} \text{and} \hspace{1mm} v \in D) \hspace{1mm} \text{or} \hspace{1mm} (u \in D \hspace{1mm} \text{and} \hspace{1mm}v \in F)  \\
      \{([D,E,F],f)\}, \hspace{1mm} otherwise & 
    \end{cases}$  

\end{center}

 \vspace{12mm}
    \\
    
\hline

\end{tabular}

\begin{center}
\scriptsize \textbf{Table 9:} Maps for computation of bipartition polynomial
\end{center}

\

\textbf{Explanation}: $y + v$ denotes $y \cup\{v\} $ and $y- v$ represents the deletion of a vertex from the set for y as D or E, or F. Since the bipartition polynomial is a polynomial in three variables, we consider as many sets for computation. The first set is used for representing the outer sum, i.e., a subset of the original vertex set. The second set illustrates the vertices that are connected to the vertices of the vertex subset. The final set is required to store those specific vertices sharing an edge between the set from the edge boundary to the original vertex subset. Indices are considered in the form of a triple consisting of three sets, one for each of the categories. Since the bipartition polynomial of a null graph is 1, thereby gives the initialized state as ($[\phi,\phi,\phi],1$). Two states $z_1,z_2$ having identical index $y$ given by $(y,f_1)$ and $(y,f_2)$ respectively is replaced by $(y,f_1+f_2)$.\\ 

\begin{longtable}{|p{1.5cm}|p{15cm}|}
\hline
Step & States \\
\hline
\endfirsthead

\hline
Step & States \\
\hline
\endhead

\hline
\multicolumn{2}{r}{\textit{}} \\
\endfoot

\hline
\endlastfoot

init & ([$\phi$,$\phi$,$\phi$],1) \\
\hline
+$v_1$ & ([$\phi$,\{$v_1$\},$\phi$],1),([$\phi$,$\phi$,\{$v_1$\}],$x$)\\
\hline
+$v_3$ & ([$\phi$,\{$v_1,v_3$\},$\phi$],1),([$\phi$,\{$v_1$\},\{$v_3$\}],$x$),([$\phi$,\{$v_3$\},\{$v_1$\}],$x$),([$\phi$,$\phi$,\{$v_1,v_3$\}],$x^{2}$)\\
\hline
$\{v_1,v_3\}$ & ([$\phi$,\{$v_1,v_3$\},$\phi$],1),([$\phi$,\{$v_1$\},\{$v_3$\}],$x$),([\{$v_1$\},$\phi$,\{$v_3$\}],$xyz$),([$\phi$,\{$v_3$\},\{$v_1$\}],$x$),([\{$v_3$\},$\phi$,\{$v_1$\}],$xyz$),\ 
([$\phi$,$\phi$,\{$v_1,v_3$\}],$x^{2}$)\\
\hline
+$v_2$ & ([$\phi$,\{$v_1,v_3,v_2$\},$\phi$],1),([$\phi$,\{$v_1,v_3$\},\{$v_2$\}],$x$),([$\phi$,\{$v_1,v_2$\},\{$v_3$\}],$x$),([$\phi$,\{$v_1$\},\{$v_3,v_2$\}],$x^{2}$),\
([\{$v_1$\},\{$v_2$\},\{$v_3$\}],$xyz$),([\{$v_1$\},$\phi$,\{$v_3,v_2$\}],$x^{2}yz$),([$\phi$,\{$v_3,v_2$\},\{$v_1$\}],$x$),([$\phi$,\{$v_3$\},\{$v_1,v_2$\}],$x^{2}$),\ 
([\{$v_3$\},\{$v_2$\},\{$v_1$\}],$xyz$),([\{$v_3$\},$\phi$,\{$v_1,v_2$\}],$x^{2}yz$),([$\phi$,\{$v_2$\},\{$v_1,v_3$\}],$x^{2}$),([$\phi$,$\phi$,\{$v_1,v_3,v_2$\}],$x^{3}$)\\
\hline
$\{v_1,v_2\}$& ([$\phi$,\{$v_1,v_3,v_2$\},$\phi$],1),([$\phi$,\{$v_1,v_3$\},\{$v_2$\}],$x$),([\{$v_1$\},\{$v_3$\},\{$v_2$\}],$xyz$),([$\phi$,\{$v_1,v_2$\},\{$v_3$\}],$x$),\ 
([$\phi$,\{$v_1$\},\{$v_3,v_2$\}],$x^{2}$),([\{$v_1$\},$\phi$,\{$v_3,v_2$\}],$2x^{2}yz+x^{2}yz^{2}$),([\{$v_1$\},\{$v_2$\},\{$v_3$\}],$xyz$),\ 
([$\phi$,\{$v_3,v_2$\},\{$v_1$\}],$x$),([\{$v_2$\},\{$v_3$\},\{$v_1$\}],$xyz$),([$\phi$,\{$v_3$\},\{$v_1,v_2$\}],$x^{2}$),([\{$v_3$\},\{$v_2$\},\{$v_1$\}],$xyz$),\ 
([\{$v_3,v_2$\},$\phi$,\{$v_1$\}],$xy^{2}z^{2}$),([\{$v_3$\},$\phi$,\{$v_1,v_2$\}],$x^{2}yz$),([$\phi$,\{$v_2$\},\{$v_1,v_3$\}],$x^{2}$),\  
([\{$v_2$\},$\phi$,\{$v_1,v_3$\}],$x^{2}yz$),([$\phi$,$\phi$,\{$v_1,v_3,v_2$\}],$x^{3}$)\\
\hline
$\{v_2,v_3\}$ & ([$\phi$,\{$v_1,v_3,v_2$\},$\phi$],1),([$\phi$,\{$v_1,v_3$\},\{$v_2$\}],$x$),([\{$v_3$\},\{$v_1$\},\{$v_2$\}],$xyz$),([\{$v_1$\},\{$v_3$\},\{$v_2$\}],$xyz$),\ 
([\{$v_1,v_3$\},$\phi$,\{$v_2$\}],$xy^{2}z^{2}$),([$\phi$,\{$v_1,v_2$\},\{$v_3$\}],$x$),([\{$v_2$\},\{$v_1$\},\{$v_3$\}],$xyz$),([$\phi$,\{$v_1$\},\{$v_3,v_2$\}],$x^{2}$)
([\{$v_1$\},$\phi$,\{$v_3,v_2$\}],$2x^{2}yz+x^{2}yz^{2}$),([\{$v_1$\},\{$v_2$\},\{$v_3$\}],$xyz$),([\{$v_1,v_2$\},$\phi$,\{$v_3$\}],$xy^{2}z^{2}$)
([$\phi$,\{$v_3,v_2$\},\{$v_1$\}],$x$),([\{$v_2$\},\{$v_3$\},\{$v_1$\}],$xyz$),([$\phi$,\{$v_3$\},\{$v_1,v_2$\}],$x^{2}$),\ 
([\{$v_3$\},$\phi$,\{$v_1,v_2$\}],$2x^{2}yz+x^{2}yz^{2}$),([\{$v_3$\},\{$v_2$\},\{$v_1$\}],$xyz$), ([\{$v_3,v_2$\},$\phi$,\{$v_1$\}],$xy^{2}z^{2}$),\ 
([$\phi$,\{$v_2$\},\{$v_1,v_3$\}],$x^{2}$),([\{$v_2$\},$\phi$,\{$v_1,v_3$\}],$x^{2}yz$),([\{$v_2$\},$\phi$,\{$v_1,v_3$\}],$x^{2}yz+x^{2}yz^{2}$),
([$\phi$,$\phi$,\{$v_1,v_3,v_2$\}],$x^{3}$)\\
\hline 
$-v_1$ & ([$\phi$,\{$v_3,v_2$\},$\phi$],$1+x$),([$\phi$,\{$v_3$\},\{$v_2$\}],$x+xyz+x^{2}$),([\{$v_3$\},$\phi$,\{$v_2$\}],$xyz+2x^{2}yz+x^{2}yz^{2}+xy^{2}z^{2}$),\ 
([$\phi$,\{$v_2$\},\{$v_3$\}],$x+xyz+x^{2}$),([\{$v_2$\},$\phi$,\{$v_3$\}],$xyz+xy^{2}z^{2}+2x^{2}yz+x^{2}yz^{2}$),\ 
([$\phi$,$\phi$,\{$v_3,v_2$\}],$x^{2}+2x^{2}yz+x^{2}yz^{2}+x^{3}$),([\{$v_2$\},\{$v_3$\},$\phi$],$xyz$),  ([\{$v_3$\},\{$v_2$\},$\phi$],$xyz+xy^{2}z^{2}$)\\ 
\hline
$-v_3$ & ([$\phi$,\{$v_2$\},$\phi$],$1+2x+2xyz+x^{2}+xy^{2}z^{2}$),([$\phi$,$\phi$,\{$v_2$\}],$x+2xyz+2x^{2}+4x^{2}yz+2x^{2}yz^{2}+xy^{2}z^{2}+x^{3}$),\ 
([\{$v_2$\},$\phi$,$\phi$],$2xyz+xy^{2}z^{2}+2x^{2}yz+x^{2}yz^{2}$),\\ 
\hline
$-v_2$ & ([$\phi$,$\phi$,$\phi$],$\boldsymbol{ 1+3x+3x^{2}+x^{3}+6xyz+3xy^{2}z^{2}+6x^{2}yz+3x^{2}yz^{2}}$)\\
\hline 
\end{longtable}

\begin{center}
Table 10: Computation of bipartition polynomial for the graph in Figure 7
\end{center}

\section{Python Implementation}\label{section-Python}

The code for this project is available on GitHub: 
\href{https://github.com/MehulBafna/Graph-Polynomials}{https://github.com/MehulBafna/Graph-Polynomials}

\normalsize{
\bibliographystyle{unsrt}
\bibliography{references}} 

\end{document}